\newtheorem{remark}{\textbf{Remark}}
\newtheorem{assumption}{\emph{\textbf{Assumption}}}
\newtheorem{theorem}{\textbf{Theorem}}
\newtheorem{lemma}{\textbf{Lemma}}
\newtheorem{example}{Example}
\title{\LARGE \bf
	Distributed Cooperative Multi-Agent Reinforcement Learning with Directed Coordination Graph
}
\author{Gangshan Jing,~He Bai,~Jemin George,~Aranya Chakrabortty~and~Piyush. K. Sharma
	\thanks{G.~Jing and A. Chakrabortty are with  North Carolina State University, Raleigh, NC 27695, USA.
		{\tt\small \{gjing, achakra2\}@ncsu.edu}}%
	\thanks{H.~Bai is with Oklahoma State University, Stillwater, OK 74078, USA.
		{\tt\small he.bai@okstate.edu}}%
	\thanks{J.~George and P.~Sharma are with the DEVCOM U.S. Army Research Laboratory, Adelphi, MD 20783, USA.
		{\tt\small \{jemin.george.civ,piyush.k.sharma.civ\}@army.mil}}%
}
\begin{document}
	
	\setlength{\abovedisplayskip}{4pt}
	\setlength{\belowdisplayskip}{4pt}
	
	\maketitle

	\begin{abstract}
	Existing distributed cooperative multi-agent reinforcement learning (MARL) frameworks usually assume undirected coordination graphs and communication graphs while estimating a global reward via consensus algorithms for policy evaluation. Such a framework may induce expensive communication costs and exhibit poor scalability due to requirement of global consensus. In this work, we study MARLs with directed coordination graphs, and propose a distributed RL algorithm where the local policy evaluations are based on local value functions. The local value function of each agent is obtained by local communication with its neighbors through a directed learning-induced communication graph, without using any consensus algorithm. A zeroth-order optimization (ZOO) approach based on parameter perturbation is employed to achieve gradient estimation. By comparing with existing ZOO-based RL algorithms, we show that our proposed distributed RL algorithm guarantees high scalability. A distributed resource allocation example is shown to illustrate the effectiveness of our algorithm.
	\end{abstract}
	
	\begin{keywords}
		Reinforcement learning, multi-agent systems, decomposition, distributed control
	\end{keywords}

	\section{Introduction}
Traditional Reinforcement Learning (RL) aims to find an optimal policy for a single agent to accomplish a specific task by making the agent interact with the environment. The problem, however, is much more complex for multi-agent reinforcement learning (MARL) due to the non-stationary environment for each agent and the curse of dimensionality. MARL, therefore, has been attracting increasing attention recently, and has been studied extensively in e.g., \cite{kar2013cal,zhang2018fully,zhang2020cooperative,chen2021communication,cassano2020multiagent,gronauer2021multi,qu2020scalable}. Markov game \cite{littman1994markov} has been commonly employed as a mathematical formulation for MARL, where each agent has its own reward and different agents may have conflicts of interest. In distributed cooperative MARL, e.g., \cite{kar2013cal,zhang2018fully,zhang2020cooperative}, the global reward is usually formulated as the average of the sum of all agents' local rewards. Accordingly, policy gradient algorithms always require each agent to track the global reward by interacting with its neighbors via a connected undirected communication graph. Moreover, due to the non-stationarity of the environment, the global state information is assumed to be known for all agents in several references, e.g.,  \cite{zhang2018fully,chen2021communication}. In \cite{qu2020scalable}, the authors considered a more distributed setting, where each agent has its local state space and local action space, and different agents are coupled through network-based interactions. This is similar to the setting in the present work.
	
In this paper, we study distributed cooperative MARL with partial observations and a directed coordination graph. We consider a general case where the coordination graph and the communication graph are different but related to each other. The coordination graph is considered to be weakly connected, which specifies the coordination relationship between agents and determines the partial observation of each agent. The communication graph is designed for each agent to obtain the value of its \emph{local value-function}, which is a directed graph dictated by the coordination graph. We show that under an appropriate design for local value-functions, the local gradient of the global value-function with respect to the policy of each agent is equivalent to the local gradient of a local value-function (see Lemma \ref{le lg=gg}). As a result, gradient estimation can be achieved without knowing the global value. 
	
The main contributions of our work in contrast to the existing literature are summarized as follows:
	
	\begin{enumerate}	
	\item We employ a weakly connected graph to characterize the coordination relationships among agents, which also determines the partial observation of each agent. The notion ``coordination graph" has been used in \cite{guestrin2002coordinated,kok2006collaborative} to describe limited communication and observability in the network and enable distributed Q-learning. However, this graph was always considered bidirectional in those works, while we use a general directed graph. 
	
	\item We establish a connection between the coordination graph and the communication graph required for learning, based on which each agent can estimate its local gradient via communications with its designated set of agents. The agents are not required to reach consensus on any global index. Such a setting is more scalable to large-scale networks compared with the distributed RL based on average consensus, e.g., \cite{kar2013cal,zhang2018fully,zhang2020cooperative}.
	
	\item Distributed MARL via value-functions involving only partial agents has also been studied in \cite{qu2020scalable}, where the local value-function of each agent is designed to approximate the global value-function, by ignoring some other agents. In our work, there is no error induced by the use of local value-functions in the sense that the partial gradient of each local value-function is equivalent to that of the global value-function; see Lemma \ref{le lg=gg}.
	
	\item Our distributed RL algorithm is a policy gradient algorithm based on zeroth-order optimization (ZOO) with exploration in the parameter space. It is known that the dimension of the parameter space is typically higher than that of the action space \cite{vemula2019contrasting,kumar2020zeroth}, which is the main shortage of parameter-based ZOO compared with action-based ZOO for RL. In this paper, the sample space dimension for each agent is significantly reduced due to the use of local value-functions. Moreover, under different zeroth-order oracles, we show that our learning framework always exhibits a reduced variance of the estimated gradient compared with global value-based policy evaluation. Note that most of the existing distributed MARL works (e.g., \cite{zhang2018fully,zhang2020cooperative,cassano2020multiagent}) require estimating the value of the global value-function.	
\end{enumerate}


The rest of this paper is organized as follows. Section \ref{sec PF} describes the MARL formulation and the main goal of this paper. Section \ref{sec local} proposes  the local value-function design and specifies the communication graph required for learning. Section \ref{sec algorithm} shows the distributed RL algorithm, and provides convergence and variance analysis. Section \ref{sec sim} shows the simulation result of an application example. Section \ref{sec: conclusion} concludes the paper.

	\textbf{Notation}: Throughout the paper, $\mathcal{G}=(\mathcal{V},\mathcal{E})$ is an unweighted directed graph, where $\mathcal{V}=\{1,...,N\}$ is the set of vertices, $\mathcal{E}\subset\mathcal{V}\times\mathcal{V}$ is the set of edges, $(i,j)\in\mathcal{E}$ means that there is a directional edge in $\mathcal{G}$ from $i$ to $j$. A path from $i$ to $j$ is a sequence of distinct edges of the form $(i_1,i_2)$, $(i_2,i_3)$, ..., $(i_{r-1},i_r)$ where $i_1=i$ and $i_r=j$. We use $i\stackrel{\mathcal{E}}{\longrightarrow} j$ to denote that there is a path from $i$ to $j$ in edge set $\mathcal{E}$. A subgraph $\mathcal{G}'=(\mathcal{V}',\mathcal{E}')$ with $\mathcal{V}'\subseteq\mathcal{V}$ and $\mathcal{E}'\subseteq\mathcal{E}$ is said to be a strongly connected component (SCC) if there is a path between any two vertices in $\mathcal{G}'$. One vertex is a special SCC. The $d\times d$ identity matrix is denoted by $I_d$, the $a\times b$ zero matrix is denoted by $\mathbf{0}_{a\times b}$. $\mathbb{R}^d$ is the $d$-dimensional Euclidean space. $\mathbb{N}$ is the set of non-negative integers.  

	\section{Problem Formulation}\label{sec PF}
	
	\subsection{Cooperative MARL with Partial Observation}
	
	We model the MARL as a tuple $\mathcal{M}=(\mathcal{V},\mathcal{E}_C, \Pi_{i\in\mathcal{V}}\mathcal{M}_i, \Pi_{i\in\mathcal{V}}\mathcal{O}_i,\Pi_{i\in\mathcal{V}}R_i,\gamma)$ with $\mathcal{M}_i=(\mathcal{S}_i, \mathcal{A}_i,\mathcal{P}_i)$ denoting the process for agent $i$, where 
	\begin{itemize}
		\item $\mathcal{V}=\{1,...,N\}$ is the set of agent indices;
		
		\item $\mathcal{E}_C\subseteq\mathcal{V}\times\mathcal{V}$ is the edge set of the {\it coordination graph} $\mathcal{G}_C=(\mathcal{V},\mathcal{E}_C)$, which specifies the coordination relationship between agents;
		
		\item $\mathcal{S}_i$ and $\mathcal{A}_i$ are the state space and the action space of agent $i$, respectively;
		
		\item $\mathcal{P}_i:\Pi_{j\in\mathcal{I}_i}\mathcal{S}_j\times \Pi_{j\in\mathcal{I}_i}\mathcal{A}_j\times\mathcal{S}_i\rightarrow P(\mathcal{S}_i)$ is the transition probability function specifying the probability of transition to state $s'_i\in\mathcal{S}_i$ under states $\{s_j\}_{j\in\mathcal{I}_i}$ and actions $\{a_j\}_{j\in\mathcal{I}_i}$, here $P(\mathcal{S}_i)$ is the set of probability measures on $\mathcal{S}_i$, $\mathcal{I}_i=\{j\in\mathcal{V}:(j,i)\in\mathcal{E}_C\}\cup\{i\}$;
		
		\item $R_i:\mathcal{S}_i\times\mathcal{A}_i\rightarrow\mathbb{R}$ is the immediate reward returned to agent $i$ when agent $i$ takes action $a_i\in\mathcal{A}_i$ at the current state $s_i\in\mathcal{S}_i$;
		
		\item $\mathcal{O}_i=\Pi_{j\in\mathcal{I}_i}\mathcal{S}_j$ is the observation space\footnote{The proposed results in this work are also applicable to cases when the observation space of agent $i$ is only a subset of $\Pi_{j\in\mathcal{I}_i}\mathcal{S}_j$, e.g., $\mathcal{S}_i$.} of agent $i$, which includes the states of all the agents in $\mathcal{I}_i$;
		
		\item $\gamma\in(0,1]$ is the discount factor that trades off the instantaneous and future rewards.
	\end{itemize}
	
	Let $\mathcal{S}=\Pi_{j\in\mathcal{V}}\mathcal{S}_j$, $\mathcal{A}=\Pi_{j\in\mathcal{V}}\mathcal{A}_j$, and $\mathcal{P}=\Pi_{j\in\mathcal{V}}\mathcal{P}_j$ denote  the state space, action space and state transition matrix of the whole multi-agent system. Let $\pi:\mathcal{S}\times \mathcal{A}\rightarrow P(\mathcal{A})$ and $\pi_i: \mathcal{O}_i\times\mathcal{A}_i\rightarrow P(\mathcal{A}_i)$ be the global policy function of the MAS and the local policy function of agent $i$, respectively. Here $P(\mathcal{A})$ and $P(\mathcal{A}_i)$ are the sets of probability measures on $\mathcal{A}$ and $\mathcal{A}_i$, respectively. The global policy is the policy of the whole MAS from the centralized perspective, thus is based on the global state $s$. The local policy of agent $i$ is the policy from agent $i$'s point of view, which is based on $o_i$ composed of partial states. Note that a global policy always corresponds to a collection of local policies uniquely.
	
	At each time step $t$ of the MDP, each agent $i\in\mathcal{V}$ executes an action $a_{i,t}\in\mathcal{A}_i$ according to its policy and the local observation $o_{i,t}\in\mathcal{O}_i$, resulting in transition to a new state $s_{i,t+1}$, then obtains a reward $r_{i,t+1}(s_{i,t},a_{i,t})$.  Note that such a formulation is different from that in \cite{zhang2018fully,chen2021communication}, where the transition and reward of each agent are associated with the global state $s_t$.


	The long-term accumulated discounted global reward is defined as
	\begin{equation}
		R(s,a)=\sum_{t=0}^{T-1}\gamma^tr_t(s_t,a_t)=\sum_{i=1}^NR_i(s_i,a_i),
	\end{equation} 
where $R_i(s_i,a_i)=\sum_{t=0}^{T-1}\gamma^tr_{i,t}(s_{i,t},a_{i,t})$, $r_t$ is the global reward for the MAS at time $t$, $R_i(s,a)$ is the long-term individual reward for agent $i$, $r_{i,t}$ is the individual reward for agent $i$ at  time $t$, and $T$ is the number of iterations we consider. Note that maximizing $R(s,a)$ is equivalent to maximizing its average $\frac{1}{N}R(s,a)$, which has been commonly adopted as the learning objective in many MARL references, e.g. \cite{kar2013cal,zhang2018fully,zhang2020cooperative,qu2020scalable}. 

Based on the long term global reward, with a given policy $\pi$, we are able to define the global state value-function $V^\pi(s')=\mathbb{E}[R(s,\pi(s))|s_0=s']$ and state-action value-function $Q^\pi(s',a')=\mathbb{E}[R(s,a)|s_0=s',a_0=a']$, which describe the expected long term global reward when agents have initial state $s'$ and initial state-action pair $(s',a')$, respectively. Similarly, a local state value-function with initial state $s'$ for each agent $i$ can be defined as $V_i^\pi(s')=\mathbb{E}[R_i(s_i,a_i)|s_0=s']$.
	
The goal of the MARL problem in this paper is to find a solution to
	\begin{equation}
		\begin{split}
			\max_{\pi} J(\pi):=\mathbb{E}_{s_0\sim\mathcal{D}}V^{\pi}(s_0),
		\end{split}
	\end{equation}
	where $\mathcal{D}$ denotes the distribution that the initial state follows. Our approach is also applicable to other formulations for $J(\pi)$, for example, $J(\pi)=\mathbb{E}_{s_0\sim\mathcal{D}_s,a_0\sim\mathcal{D}_a}Q^\pi(s_0,a_0)$, where $\mathcal{D}_s$ and $\mathcal{D}_a$ are the two distributions that $s_0$ and $a_0$ follow, respectively. For convenience of analysis, we also define the expected value to be maximized corresponding to individual reward for each agent $i$ as 
	\begin{equation}
		J_i(\pi)=\mathbb{E}_{s_0\sim\mathcal{D}}V_i^\pi(s_0), i\in\mathcal{V}.
	\end{equation}

	To solve for an optimal policy such that $J(\pi)$ can be maximized, without loss of generality, we parameterize the global policy $\pi(s,a)$ using parameters $\theta=(\theta_1^\top,...,\theta_N^\top)^\top\in\mathbb{R}^d$ with $\theta_i\in\mathbb{R}^{d_i}$. The global policy and agent $i$'s local policy are then rewritten as $\pi^\theta(s,a)$ and $\pi^{\theta_i}_i(o_i,a_i)$, respectively. As a result, it suffices to solve the following optimization problem:
	\begin{equation}\label{maxJ}
		\max_{\theta} J(\theta):=\mathbb{E}_{s_0\sim\mathcal{D}}V^{\pi(\theta)}(s_0).
	\end{equation}

    A typical method for solving (\ref{maxJ}) is implementing the policy gradient algorithm for each agent $i$:
    \begin{equation}\label{pg}
    	\theta^{k+1}_i=\theta^k_i+\eta\nabla_{\theta_i} J(\theta^k), i\in\mathcal{V},
    \end{equation} 
    where $\eta>0$ is the step-size.    
    
    Existing distributed policy gradient methods such as actor-critic~\cite{zhang2018fully}  and zeroth-order optimization \cite{zhang2020cooperative} can always be employed to estimate $\nabla_{\theta_i}J(\theta^k)$ when there is a connected undirected communication graph among the agents. However, these approaches are based on estimating the value of the global value-function, which requires a large amount of communications during each learning episode. Moreover, policy evaluation based on the global value-function has a significant scalability issue due to the high dimension of the state and action spaces for large-scale networks. 
        
    In the next section, by utilizing the structure of the directed coordination graph $\mathcal{G}_C$, we will design a local value-function for each agent, based on which the local gradient can be derived with only partial agents involved.

\section{Local Value-Function and Communication Graph Design}\label{sec local}

In this section, we introduce a local value-function for each agent and propose a graph condition for inter-agent communications required for distributed learning.

\subsection{Graph Clustering and Local Value-Function Design}

We consider the coordination graph $\mathcal{G}_C$ as a weakly connected directed graph\footnote{When $\mathcal{G}_C$ is not weakly connected, the problem can be naturally decomposed to independent smaller-sized MARL problems.}. Note that SCCs can always be found in $\mathcal{G}_C$. Suppose that we already have a clustering $\mathcal{V}=\cup_{j=1}^n\mathcal{V}_j$, where $\mathcal{V}_j$ is the vertex set of the $j$-th maximum independent SCC in $\mathcal{G}_C$, and $\mathcal{V}_j\cap\mathcal{V}_k=\varnothing$ for any distinct $j,k\in\{1,...,n\}$.
\begin{figure}
	\centering
	\includegraphics[width=8cm]{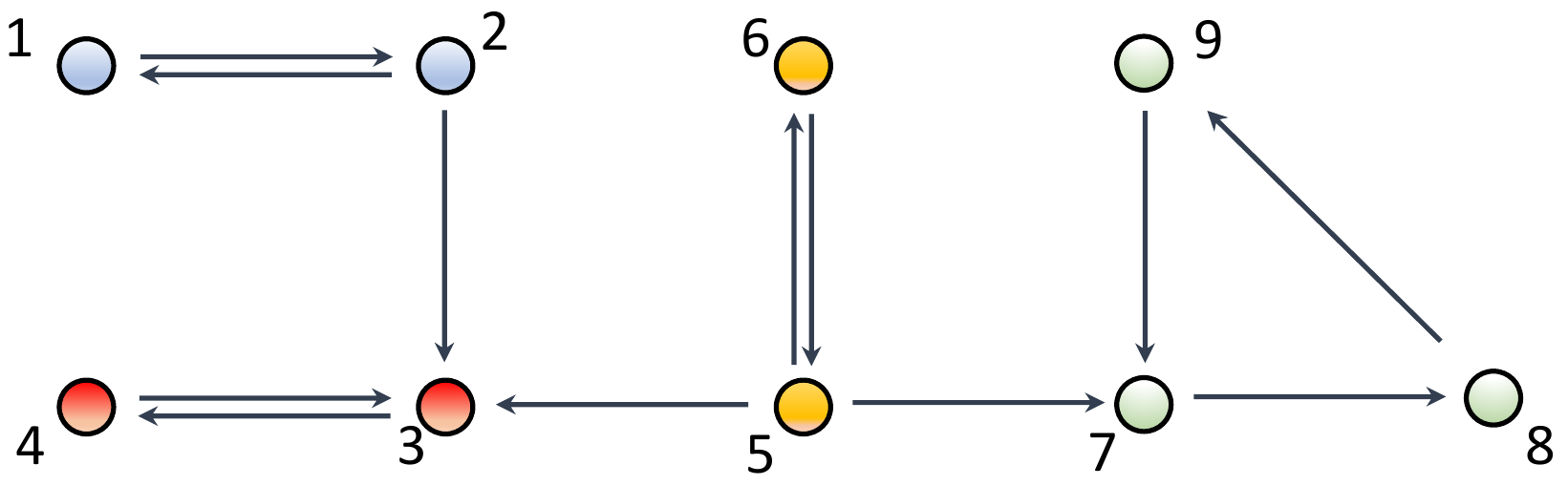}
	\caption{A coordination graph with 9 agents and 4 independent maximum SCCs.} \label{fig coordination graph}
\end{figure}
An example for the coordination graph is shown in Fig. \ref{fig coordination graph}, where there are 4 clusters corresponding to 4 maximum independent SCC, i.e., $\mathcal{V}_1=\{1,2\}$, $\mathcal{V}_2=\{3,4\}$, $\mathcal{V}_3=\{5,6\}$, and $\mathcal{V}_4=\{7,8,9\}$.

Let $$\mathcal{N}^L_i=\{j\in\mathcal{V}: i\stackrel{\mathcal{E}}{\longrightarrow} j\}$$ be the set of vertices in graph $\mathcal{G}$ that are reachable from agent $i$. Denote $\mathcal{I}^L_i=\mathcal{N}^L_i\cup\{i\}$. It is observed that $\mathcal{I}_i^L=\mathcal{I}_j^L$ if $i$ and $j$ belong to the same cluster. 

Define the following composite reward for agent $i$:
\begin{equation}
	\hat{r}_{i,t}=\sum_{j\in\mathcal{I}_i^L}r_{j,t}(s_{j,t},a_{j,t}).
\end{equation}
Accordingly, we define a local state value-function for agent $i$:
\begin{equation}
	\hat{V}^\pi_i(s')=\mathbb{E}\left[\hat{R}_i(s,a)|s_0=s'\right],
\end{equation}
where $\hat{R}_i(s,a)=\sum_{t=0}^{T-1} \gamma^t\hat{r}_{i,t}$.

Now we are able to obtain a local value-function for each agent $i$ to maximize:
\begin{equation}
	\hat{J}_i(\theta)=\mathbb{E}_{s_0\sim\mathcal{D}}\hat{V}^{\pi(\theta)}_i(s_{0})=\sum_{j\in\mathcal{I}^L_i}J_j(\theta).
\end{equation}

Given a function $f(\theta):\mathbb{R}^d\rightarrow\mathbb{R}$ and a positive $\delta$, we define
\begin{equation}
	f^\delta(\theta)=\mathbb{E}[f(\theta+\delta u)], ~~u\sim\mathcal{N}(0,I_d).
\end{equation}
 The following lemma builds a connection between the local value-function $\hat{J}_i^\delta(\theta)$ and the global value-function $J^\delta(\theta)$.

\begin{lemma}\label{le lg=gg}
The following statements are true: 
	
(i)	$\nabla_{\theta_i}J^\delta(\theta)=\nabla_{\theta_i}\hat{J}_i^\delta(\theta)$ for any $r>0$, $i\in\mathcal{V}$.

(ii) If $J_i(\theta)$, $i\in\mathcal{V}$ are differentiable, then $\nabla_{\theta_i}J(\theta)=\nabla_{\theta_i}\hat{J}_i(\theta)$, $i\in\mathcal{V}$.
\end{lemma}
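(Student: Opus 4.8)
The plan is to reduce both statements to a single \emph{information-flow} fact about the individual objectives $J_j$, and then push that fact through the three linear operations involved: the summation $\sum_j$, the partial gradient $\nabla_{\theta_i}$, and the Gaussian smoothing $(\cdot)^\delta$. Since $R=\sum_{j\in\mathcal{V}}R_j$, linearity of expectation gives $J(\theta)=\sum_{j\in\mathcal{V}}J_j(\theta)$, and the text already records $\hat{J}_i(\theta)=\sum_{j\in\mathcal{I}^L_i}J_j(\theta)$; hence $J(\theta)-\hat{J}_i(\theta)=\sum_{j\notin\mathcal{I}^L_i}J_j(\theta)$, and by linearity of smoothing also $J^\delta-\hat{J}_i^\delta=\sum_{j\notin\mathcal{I}^L_i}J_j^\delta$. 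Therefore both parts follow at once from the structural claim: whenever $j\notin\mathcal{I}^L_i$, the map $\theta\mapsto J_j(\theta)$ does not depend on the block $\theta_i$. Part (ii) is then immediate by differentiation, and part (i) follows because smoothing a function that ignores $\theta_i$ produces a (now differentiable) function that still ignores $\theta_i$.

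To prove the structural claim I would introduce the ``upstream'' set $\mathcal{B}_j:=\{k\in\mathcal{V}: k\stackrel{\mathcal{E}_C}{\longrightarrow} j\}\cup\{j\}$, i.e.\ the agents that can reach $j$; note $i\in\mathcal{B}_j\iff j\in\mathcal{I}^L_i$, so $j\notin\mathcal{I}^L_i$ is precisely $i\notin\mathcal{B}_j$. The key combinatorial observation is that $\mathcal{B}_j$ is closed under taking coordination in-neighbors: for any $k\in\mathcal{B}_j$ and any $l\in\mathcal{I}_k$, concatenating the edge/path from $l$ to $k$ with that from $k$ to $j$ yields $l\in\mathcal{B}_j$, so $\mathcal{I}_k\subseteq\mathcal{B}_j$. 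Since each agent $k$'s transition kernel $\mathcal{P}_k$ and observation $o_k$ (hence its local policy $\pi_k^{\theta_k}$) involve only indices in $\mathcal{I}_k$, the sub-process $\{(s_{k,t},a_{k,t})\}_{k\in\mathcal{B}_j}$ is autonomous, driven solely by $\{\pi_k^{\theta_k}\}_{k\in\mathcal{B}_j}$ and the $\mathcal{B}_j$-marginal of the initial distribution. I would make this precise by writing the trajectory density as a product of the per-agent factors $\pi_k^{\theta_k}$ and $\mathcal{P}_k$ and marginalizing out every variable whose index lies outside $\mathcal{B}_j$ (an ancestral-set marginalization in the induced dynamic Bayesian network), or equivalently by induction on $t$ showing that the law of $(s_{k,t})_{k\in\mathcal{B}_j}$ depends only on $\{\theta_k\}_{k\in\mathcal{B}_j}$. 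As $R_j$ is a function of $(s_{j,t},a_{j,t})_t$ with $j\in\mathcal{B}_j$, its expectation $J_j(\theta)$ depends only on $\{\theta_k\}_{k\in\mathcal{B}_j}$, hence is constant in $\theta_i$ whenever $i\notin\mathcal{B}_j$.

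Finishing the two parts is then routine. For (ii), the structural claim gives $\nabla_{\theta_i}J_j=0$ for each $j\notin\mathcal{I}^L_i$, so $\nabla_{\theta_i}(J-\hat{J}_i)=0$. For (i) (reading the ``$r$'' in the statement as $\delta$), write $J_j^\delta(\theta)=\mathbb{E}_u[J_j(\theta+\delta u)]$ with $u$ partitioned as $\theta$; since $J_j$ ignores its $i$-th block, integrating out the independent Gaussian component $u_i$ contributes a factor $1$ and leaves $J_j^\delta$ a function of $\theta_{-i}$ only. The Gaussian convolution is smooth, so $\nabla_{\theta_i}J_j^\delta=0$, and summing over $j\notin\mathcal{I}^L_i$ gives $\nabla_{\theta_i}(J^\delta-\hat{J}_i^\delta)=0$.

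The only real obstacle I anticipate is the marginalization step: one must argue carefully that the $\mathcal{B}_j$-subtrajectory genuinely carries no dependence on $\theta_i$. The ancestral-closedness identity $\mathcal{I}_k\subseteq\mathcal{B}_j$ is exactly what makes the induction on $t$ close; the mild subtlety is that the initial distribution $\mathcal{D}$ need not factor across agents, but since the initial states carry no policy parameters, only its $\mathcal{B}_j$-marginal survives and $\theta_i$-independence is untouched. A fully self-contained alternative for (ii) avoids marginalization through the score function: $\nabla_{\theta_i}J_j(\theta)=\mathbb{E}\!\left[R_j\sum_t\nabla_{\theta_i}\log\pi_i^{\theta_i}(a_{i,t}\mid o_{i,t})\right]$, and since $j$ is unreachable from $i$ the variables in $R_j$ are non-descendants of each $a_{i,t}$, so $R_j$ is conditionally independent of $a_{i,t}$ given $o_{i,t}$; together with the zero-mean score identity $\mathbb{E}[\nabla_{\theta_i}\log\pi_i^{\theta_i}(a_{i,t}\mid o_{i,t})\mid o_{i,t}]=0$, every summand vanishes.
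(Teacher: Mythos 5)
Your proposal is correct and follows essentially the same route as the paper: both reduce the lemma to the structural claim that $\bar{J}_i=J-\hat{J}_i=\sum_{j\notin\mathcal{I}_i^L}J_j$ is independent of $\theta_i$, proved via the upstream set $\mathcal{N}_j^{L-}\cup\{j\}$ (your $\mathcal{B}_j$) and the recursion through the local transition/policy maps, and then handle part (i) by noting that Gaussian smoothing preserves this independence --- the paper carries out this last step by factoring Nesterov's integral representation and using $\int_{\mathbb{R}^{d_i}}u_i e^{-\frac12\|u_i\|^2}du_i=0$, which is the same computation phrased integrally rather than via differentiability of the convolution. Your supplementary score-function argument for (ii) is a valid alternative the paper does not use, but it is not needed.
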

\begin{proof}	
		Define 
	\begin{equation}
		\bar{J}_i(\theta)=J(\theta)-\hat{J}_i(\theta)=\sum_{j\in\mathcal{V}\setminus\mathcal{I}_i^L}\mathbb{E}_{s_0\sim\mathcal{D}}V_j^{\pi(\theta)}(s_0). 
	\end{equation}
Next we show that $\bar{J}_i(\theta)$ is independent of $\theta_i$.
	
	Let $\mathcal{N}_i^{L-}$ be the set of vertices in graph $\mathcal{G}_C$ that can reach $i$. Note that for each agent $j$, its action at time $t$, i.e., $a_{j,t}$, is only affected by the partial observation $o_{j,t}$, the current state $s_{j,t}$, and policy $\theta_j$. Therefore, there exists a function $f_j:\mathcal{O}_j\times\mathbb{R}^{d_j}\rightarrow P(\mathcal{A}_j)$ such that 
	\begin{equation}\label{ajt}
		\begin{split}
			a_{j,t}\sim f_j(o_{j,t},\theta_j)=f_j(\{s_{k,t}\}_{k\in\mathcal{I}_j}, \theta_j).
		\end{split}		
	\end{equation}
	Similarly, according to the definition of $\mathcal{P}_i$, there exists another function $h_j:\Pi_{k\in\mathcal{I}_j}\mathcal{S}_k\times\Pi_{k\in\mathcal{I}_j}\mathcal{A}_k\rightarrow P(\mathcal{S}_j)$ such that
	\begin{equation}\label{sjt}
		s_{j,t}\sim h_j( \{s_{k,t-1}\}_{k\in\mathcal{I}_j}, \{a_{k,t-1}\}_{k\in\mathcal{I}_j}).
	\end{equation}
	
	According to (\ref{ajt}) and (\ref{sjt}), we conclude that $s_{j,t}$ and $a_{j,t}$ are essentially only affected by $\{s_{k,0}\}_{k\in\mathcal{N}_j^{L-}\cup\{j\}}$, $\{\theta_{k}\}_{k\in\mathcal{N}_j^{L-}\cup\{j\}}$, and the transition probability function of each agent $k\in\mathcal{N}_j^{L-}$. 
	
	Note also that if $j\notin\mathcal{I}_i^L$, then $i\notin\mathcal{N}^{L-}_j\cup\{j\}$, which implies that  $\theta_i$ never influences $\mathbb{E}[R_j(s,a)|s_0=s]$ for $j\notin\mathcal{N}_i^L$. Therefore, $\bar{J}_i(\theta)$ is independent of $\theta_i$.
	
Proof for (i):	it has been shown in \cite{nesterov2017random} that
\begin{equation}
	\nabla_\theta J^\delta(\theta)=\frac{1}{\delta\kappa}\int_{\mathbb{R}^d} J(\theta+\delta u)e^{-\frac12\|u\|^2}udu,
\end{equation}	
where 
\begin{equation}\label{kappa}
	\kappa=\int_{\mathbb{R}^d}e^{-\frac12\|u\|^2}du.
\end{equation}
Define \begin{equation}\label{Phii}
	\Phi^i=(\mathbf{0}_{d_i\times d_1},..., I_{d_i},...,\mathbf{0}_{d_i\times d_N})\in\mathbb{R}^{d_i\times d},
\end{equation}
then $u_i=\Phi^iu$. It follows that
\begin{equation}
	\begin{split}
		\nabla_{\theta_i}J^\delta(\theta)&=\Phi^i\nabla_\theta J^\delta(\theta)\\
		&=\frac{1}{\delta\kappa}\int_{\mathbb{R}^d} J(\theta+\delta u)e^{-\frac12\|u\|^2}u_idu\\
		&=\frac{1}{\delta\kappa}\int_{\mathbb{R}^d} \hat{J}_i(\theta+\delta u)e^{-\frac12\|u\|^2}u_idu\\
		&~~+\frac{1}{\delta\kappa}\int_{\mathbb{R}^d} \bar{J}_i(\theta+\delta u)e^{-\frac12\|u\|^2}u_idu.
	\end{split}
\end{equation}	
Let $\bar{\theta}_i=(...,\theta_j^\top,...)^\top_{j\neq i}\in\mathbb{R}^{d-d_i}$, $\bar{u}_i=(...,u_j^\top,...)^\top_{j\neq i}\in\mathbb{R}^{d-d_i}$. Since we have proved that $\bar{J}_i(\theta+\delta u)$ is independent of $u_i$, the following holds:
\begin{equation*}
	\begin{split}
		&\int_{\mathbb{R}^d} \bar{J}_i(\theta+\delta u)e^{-\frac12\|u\|^2}u_idu\\&
		=\int_{\mathbb{R}^{d-d_i}} \bar{J}_i(\bar{\theta}_i+\delta\bar{u}_i)e^{-\frac12\|\bar{u}_i\|^2}d\bar{u}_i\int_{\mathbb{R}^{d_i}} e^{-\frac12\|u_i\|^2}u_idu_i\\
		&=0.
	\end{split}	
\end{equation*}
Therefore, 
\begin{equation}
	\nabla_{\theta_i}J^\delta(\theta)=\frac{1}{\delta\kappa}\int_{\mathbb{R}^d} \hat{J}_i(\theta+\delta u)e^{-\frac12\|u\|^2}u_idu=\nabla_{\theta_i}\hat{J}_i^\delta(\theta).
\end{equation}

Proof for (ii): differentiability of  $J_i(\theta)$ for all $i\in\mathcal{V}$ implies that $\hat{J}_i(\theta)$ for all $i\in\mathcal{V}$ and $J(\theta)$ are differentiable as well. Since $\bar{J}_i(\theta)$ is independent of $\theta_i$, we have
\begin{equation}
	\nabla_{\theta_i}J(\theta)=\nabla_{\theta_i}(\hat{J}_i(\theta)+\bar{J}_i(\theta))=\nabla_{\theta_i}\hat{J}_i(\theta).
\end{equation}
	This completes the proof.
\end{proof}

It is important to note that statement (i) in Lemma \ref{le lg=gg} does not require $J_i(\theta)$, $i\in\mathcal{V}$ to be differentiable because $J^\delta(\theta)=\mathbb{E}[J(\theta+\delta u)]$ is always differentiable. When $J(\theta)$ is not differentiable, we choose to find the stationary point of $J^\delta(\theta)$. The gap between $J(\theta)$ and $J^\delta(\theta)$ can be bounded if $J(\theta)$ is Lipschitz continuous and $\delta>0$ is sufficiently small.

We make the following assumption for functions $J_i(\theta)$:
\begin{assumption}\label{as Lip}
	$J_i(\theta)$, $i\in\mathcal{V}$ are $L_i$-Lipschitz continuous in $\mathbb{R}^d$. That is, $|J_i(\theta)-J_i(\theta')|\leq L_i\|\theta-\theta'\|$ for any $\theta,\theta'\in\mathbb{R}^d$.
\end{assumption}

Assumption \ref{as Lip} implies that $J(\theta)$ is $L$-Lipschitz continuous in $\mathbb{R}^d$, where $L\triangleq\sum_{i\in\mathcal{V}}L_i$, due to the following fact:
\begin{equation}
	|J(\theta)-J(\theta')|\leq\sum_{i\in\mathcal{V}}|J_i(\theta)-J_i(\theta')|\leq \sum_{i\in\mathcal{V}}L_i\|\theta-\theta'\|.
\end{equation}


\subsection{Learning Graph Design}

Lemma \ref{le lg=gg} has shown that having the local gradient of a local value-function is sufficient for each agent to update its policy according to gradient ascent. In this paper, we will estimate the local gradient based on the function value.  To ensure that each agent $i$ has access to the value of $\hat{J}_i(\theta)$, we derive a learning graph (the communication graph required for learning) $\mathcal{G}_L=(\mathcal{V},\mathcal{E}_L)$, where
\begin{equation}\label{EL}
	\mathcal{E}_L=\{(j,i)\in\mathcal{V}\times\mathcal{V}: i\stackrel{\mathcal{E}_C}{\longrightarrow} j\}.
\end{equation}
Based on this definition, we have the following observations:
\begin{itemize}
	\item The subgraph corresponding to each cluster $\mathcal{V}_i$ in $\mathcal{G}_L$ is a clique.
	
	\item Once there is one communication link from one cluster to another cluster, then any pair of agents in these two clusters have a communication link.
	
	\item  The agents in the same cluster share the same local value-function.
\end{itemize}

To demonstrate the edge set definition (\ref{EL}), the learning graph corresponding to the coordination graph in Fig. \ref{fig coordination graph} is shown in Fig. \ref{fig learning graph}. According to the above-mentioned observations, the graph in Fig. \ref{fig learning graph} can be interpreted from a cluster perspective, as shown in Fig. \ref{fig cluster learning graph}. In real applications, one coordinator can be chosen in each cluster, based on which different clusters are able to exchange information with each other, so that a large amount of communication links in Fig. \ref{fig learning graph} can be avoided. 

The following assumption ensures that agent $i$ is able to communicate with those agents whose rewards influence $\hat{J}_i(\theta)$.

\begin{figure}
	\centering
	\includegraphics[width=8cm]{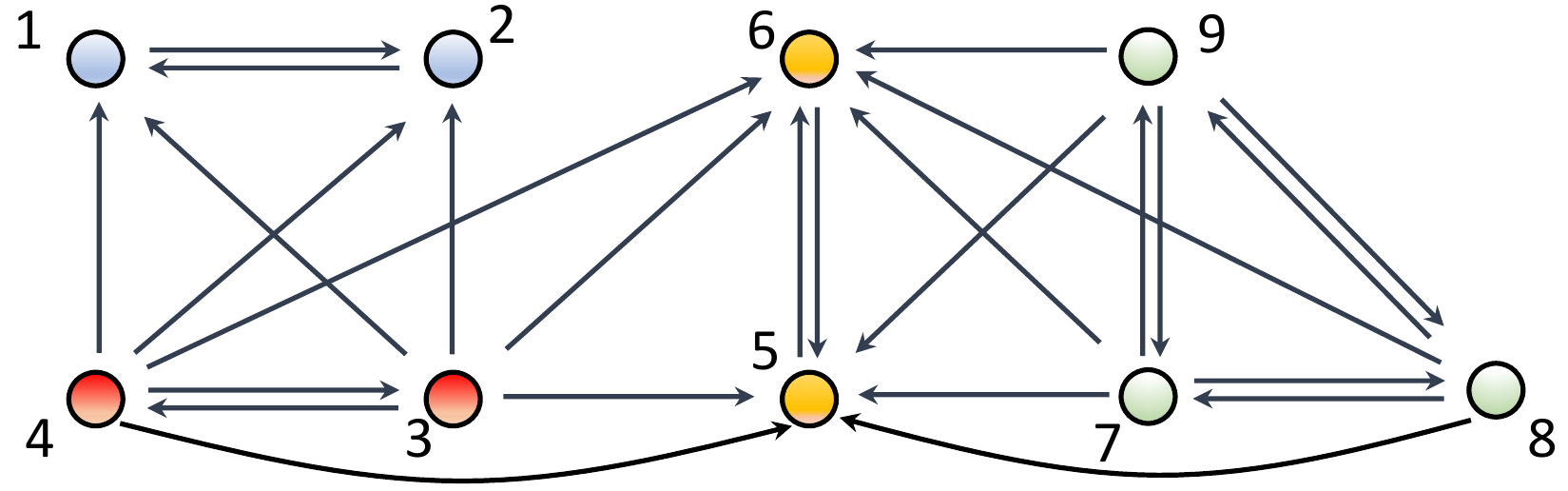}
	\caption{The learning graph corresponding to the coordination graph in Fig. \ref{fig coordination graph}.} \label{fig learning graph}
\end{figure}

\begin{figure}
	\centering
	\includegraphics[width=8cm]{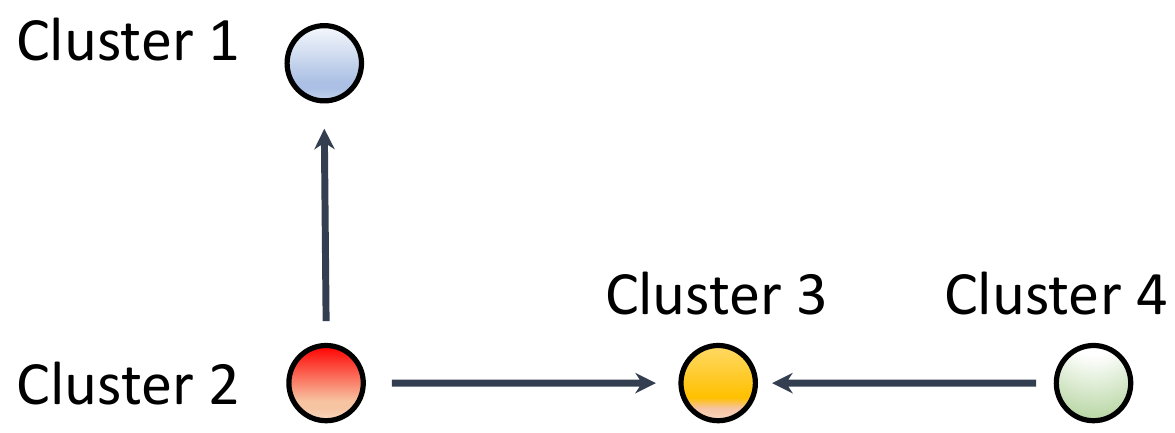}
	\caption{The cluster-wise learning graph.} \label{fig cluster learning graph}
\end{figure}

\begin{assumption}\label{as GL}
	During the learning process, agents are able to communicate with each other via the communication graph $\mathcal{G}_L$.
\end{assumption}

We assume that given the global policy $\pi(\theta)$ and initial state $s_0$, each agent is able to compute $R_i(s,a)$ and $\hat{R}_i(s,a)$ based on the received rewards $r_{j,t}$, $j\in\mathcal{I}_i^L$ at each time step $t=0,...,T-1$. By further considering random effects  in the transition, we rewrite the obtained values of agent $i$ by implementing policy $\pi(\theta)$ as $W_i(\theta,\xi)=
J_i(\theta)+\xi_i$ and $\hat{W}_i(\theta,\xi)=\hat{J}_i(\theta)+\hat{\xi}_i$ with $\hat{\xi}_i=\sum_{j\in\mathcal{I}_i^L}\xi_j$, where $\xi\in\mathbb{R}^N$ and $\xi\sim\mathcal{H}$, $\mathbb{E}[\xi_i]=0$, $\mathbb{E}[\xi_i^2]=\sigma_i^2$, $i\in\mathcal{V}$, $\mathcal{H}$ is determined by randomness of both $s_0$ and the transition probability. Therefore,
\begin{equation}\label{J=EW}
   J_i(\theta)=\mathbb{E}[W_i(\theta,\xi)], \hat{J}_i(\theta)=\mathbb{E}[\hat{W}_i(\theta,\xi)],~~~ \forall i\in\mathcal{V}.
\end{equation}


Note that different from \cite{kar2013cal,zhang2018fully,zhang2020cooperative}, where the agents need to communicate with each other frequently during each learning episode, in our work, only one-time communication for transmitting $W_i(\theta^k,\xi^k)$ among neighbors in $\mathcal{G}_L$ is required in each learning episode $k$.

	\section{Distributed RL via ZOO} \label{sec algorithm}
	
In this section, we propose a distributed RL algorithm based on ZOO with policy search in the parameter space. Since the parameter space usually has a higher dimension than the action space, ZOO with policy search in the action space was proposed in \cite{kumar2020zeroth}. However, their approach requires the action space to be continuous and leverages the Jacobian of the policy $\pi$ w.r.t. $\theta$. Our distributed RL algorithm is applicable to both continuous and discrete action spaces and even does not require $\pi$ to be differentiable. In addition, our distributed learning framework based on local value-functions is actually also compatible with the method in \cite{kumar2020zeroth}.

\subsection{Distributed RL Based on Local Value Evaluation}
Let $\theta^k=(\theta^{k\top}_1,...,\theta^{k\top}_N)^\top\in\mathbb{R}^d$ be the global policy of the MAS at episode $k$. For agent $i$, its policy is updated at the $k$-th episode according to gradient ascent:
\begin{equation}\label{update}
	\theta_i^{k+1}=\theta_i^k+\eta g_i(\theta^k, u^k,\xi^k),
\end{equation}
where we adopt the following one-point zeroth-order oracle
\begin{equation}\label{gi}
	g_i(\theta^k, u^k,\xi^k)=\frac{\hat{W}_i(\theta^k+\delta u^k,\xi^k)}{\delta}u^k_i,
\end{equation}
$\xi^k\in\mathbb{R}^p$ is the random variable induced by random initial state $s_0^k$ and possible noises in agents' transitions at episode $k$, $\delta>0$ is the smoothing radius, $u^k\sim\mathcal{N}(0,I_d)$ is the randomly sampled perturbation direction, $u^k=(u^{k\top}_1,...,u^{k\top}_N)^\top$, and $u^k_i\in\mathbb{R}^{d_i}$.

The following lemma shows that $g_i(\theta^k, u^k,\xi^k)$ is an unbiased estimation of $\nabla_{\theta_i}J^\delta(\theta^k)$.
\begin{lemma}
Given policy $\theta^k$, it holds that	$\nabla_{\theta_i}J^\delta(\theta^k)=\mathbb{E}[g_i(\theta^k, u^k,\xi^k)]$.
\end{lemma}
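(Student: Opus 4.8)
The plan is to reduce the claim to two facts already established in the excerpt: the Gaussian-smoothing gradient identity of \cite{nesterov2017random} and part (i) of Lemma \ref{le lg=gg}. First I would write the expectation of the oracle \eqref{gi} explicitly, averaging over both the perturbation direction $u^k\sim\mathcal{N}(0,I_d)$ and the transition noise $\xi^k\sim\mathcal{H}$, which are independent of one another:
\begin{equation*}
	\mathbb{E}[g_i(\theta^k,u^k,\xi^k)]=\frac{1}{\delta}\,\mathbb{E}_{u^k,\xi^k}\bigl[\hat{W}_i(\theta^k+\delta u^k,\xi^k)\,u_i^k\bigr].
\end{equation*}

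Next I would condition on $u^k$ and integrate out $\xi^k$. Since $\hat{W}_i(\theta,\xi)=\hat{J}_i(\theta)+\hat{\xi}_i$ with $\mathbb{E}[\hat{\xi}_i]=0$ by construction (recall \eqref{J=EW}), and since $\xi^k$ is independent of $u^k$, the noise contributes $\tfrac{1}{\delta}\mathbb{E}[\hat{\xi}_i]\,\mathbb{E}[u_i^k]=0$; hence $\mathbb{E}_{\xi^k}[\hat{W}_i(\theta^k+\delta u^k,\xi^k)\mid u^k]=\hat{J}_i(\theta^k+\delta u^k)$ and therefore $\mathbb{E}[g_i]=\tfrac{1}{\delta}\mathbb{E}_{u^k}[\hat{J}_i(\theta^k+\delta u^k)\,u_i^k]$. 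I would then recognize this as a smoothed gradient: applying the identity of \cite{nesterov2017random} to $\hat{J}_i$ and projecting onto the $i$-th block with $\Phi^i$ from \eqref{Phii} yields
\begin{equation*}
	\nabla_{\theta_i}\hat{J}_i^\delta(\theta^k)=\frac{1}{\delta\kappa}\int_{\mathbb{R}^d}\hat{J}_i(\theta^k+\delta u)e^{-\frac12\|u\|^2}u_i\,du=\frac{1}{\delta}\,\mathbb{E}_{u}[\hat{J}_i(\theta^k+\delta u)\,u_i],
\end{equation*}
where the second equality uses that $\kappa^{-1}e^{-\|u\|^2/2}$, with $\kappa$ as in \eqref{kappa}, is exactly the $\mathcal{N}(0,I_d)$ density. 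Finally, Lemma \ref{le lg=gg}(i) supplies $\nabla_{\theta_i}\hat{J}_i^\delta(\theta^k)=\nabla_{\theta_i}J^\delta(\theta^k)$, which closes the chain and gives the claim.

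The computation is essentially routine; the one point requiring care is the interchange of expectations and the independence structure. I expect the main (modest) obstacle to be justifying that the transition noise $\xi^k$ can be averaged out cleanly, namely that $\xi^k$ is independent of the sampled direction $u^k$ and that $\hat{J}_i(\theta)=\mathbb{E}[\hat{W}_i(\theta,\xi)]$ holds pointwise in $\theta$, together with a Fubini/dominated-convergence argument that moves the $\xi^k$-expectation inside the Gaussian smoothing integral. Assumption \ref{as Lip} controls the growth of $|\hat{J}_i|$ and makes the relevant Gaussian integrals finite, which legitimizes these interchanges. Once they are in place, the two cited facts assemble immediately into the stated unbiasedness.
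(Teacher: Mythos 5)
Your proposal is correct and follows essentially the same route as the paper's own proof: average out the zero-mean noise $\xi^k$ via (\ref{J=EW}), write $u_i^k=\Phi^i u^k$, invoke the Gaussian-smoothing gradient identity of \cite{nesterov2017random} to identify the result as $\nabla_{\theta_i}\hat{J}_i^\delta(\theta^k)$, and finish with Lemma \ref{le lg=gg}(i). The only difference is that you make explicit the independence of $\xi^k$ and $u^k$ and the Fubini-type interchange, which the paper leaves implicit.
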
	
\begin{proof}
	According to (\ref{J=EW}), 
	\begin{equation}
		\begin{split}
			\mathbb{E}[g_i(\theta^k, u^k,\xi^k)]&=\mathbb{E}[\frac{\hat{J}_i(\theta^k+\delta u^k)}{\delta}u_i^k]\\
			&=\Phi^i \mathbb{E}[\frac{\hat{J}_i(\theta^k+\delta u^k)}{\delta}u^k]\\
			&=\Phi^i \nabla_{\theta}\hat{J}_i^\delta(\theta^k)\\
			&=\nabla_{\theta_i}\hat{J}_i^\delta(\theta^k),
		\end{split}
	\end{equation}
	where $\Phi^i$ is defined in (\ref{Phii}) and the third equality has been proved in \cite{nesterov2017random}. Recall Lemma \ref{le lg=gg}, the proof is completed.
\end{proof}	

\begin{remark}
	Besides	the one-point feedback gradient estimation (\ref{gi}), there are other efficient unbiased estimation approaches in the literature such as the two-point feedback oracle \cite{nesterov2017random} and residual feedback oracle \cite{zhang2020cooperative}, via which the second moment of the estimated gradient has an upper bound independent of the value of $W_i$. The distributed RL framework in this work is also compatible with those different gradient estimates, and the variance may be reduced by utilizing different zeroth-order orcacles. However, in real applications, it may be costly to implement one more policy to compute the long term accumulated reward $W_i$, or to store  information at each time step. Therefore, we adopt the simplest conventional one-point zeroth-order oracle (\ref{gi}).
\end{remark}

Based on the zeroth-order oracle (\ref{gi}), the distributed RL algorithm is proposed in Algorithm \ref{alg}.	
	\begin{algorithm}[htbp]
		\small
		\caption{Distributed RL Algorithm}\label{alg}
		\textbf{Input}: Step-size $\eta$, initial state distribution $\mathcal{D}$, number of learning epochs $K$, number of iterations $T$ (for policy evaluation), initial policy parameter $\theta_0$, smoothing radius $\delta>0$.\\
		\textbf{Output}: $\theta^K$.
		\begin{itemize}
			\item[1.] \textbf{for} $k=0,1,...,K-1$ \textbf{do}
			\item[2.] ~~~~Sample $s_0^k\sim\mathcal{D}$.
			\item[3.] ~~~~\textbf{for} all $i\in\mathcal{V}$ \textbf{do} (Simultaneous Implementation)
			\item[4.] ~~~~~Agent $i$ samples $u_{i}^k\sim\mathcal{N}(0,I_{d_i})$, implements policy $\pi_i(\theta_i^k+\delta u_i^k)$ for $t=0,...,T-1$, observes  $W_i(\theta^k+\delta u^{k}, \xi^k)$, computes $g_i(\theta^k, u^k,\xi^k)$ in (\ref{gi}), and then obtains $\theta_i^{k+1}$ according to (\ref{update}).
			\item[5.] ~~~~\textbf{end}
			\item[6.] \textbf{end}
		\end{itemize}
	\end{algorithm}	

\begin{remark}
	Note that when the coordination graph $\mathcal{G}_C$ is strongly connected, we have $\mathcal{I}_i^L=\mathcal{V}$ for all $i\in\mathcal{V}$, then Algorithm \ref{alg} becomes a centralized RL algorithm. In practice, when the multi-agent network is of a large scale, it is reasonable to consider that the whole coordination graph is weakly connected due to limited energy, selfishness, or physical constraints of partial agents. In such cases Algorithm \ref{alg} is more efficient than centralized approaches in terms of gradient variance and convergence rate.
\end{remark}

	\subsection{Convergence Result} \label{subsec convergence}

We further make the following assumption to ensure solvability of (\ref{maxJ}).	
\begin{assumption}\label{as J*}
	There exists $J_i^*>0$ such that $|J_i(\theta)|\leq J_i^*$ for any $\theta\in\mathbb{R}^d$, $\xi\in\mathbb{R}^p$, $i\in\mathcal{V}$.
\end{assumption}		
Assumption \ref{as J*} implicitly implies that there exists an optimal policy for the RL problem (\ref{maxJ}), which is the premise of solving problem (\ref{maxJ}). Based on Assumption \ref{as J*}, we can bound $|\hat{J}_i|$ and $|J|=|\sum_{i\in\mathcal{V}}J_i|$ as $\sum_{j\in\mathcal{I}_i^L}J_j^*\triangleq \hat{J}_i^*$ and $\sum_{i\in\mathcal{V}}J_i^*\triangleq J^*$, respectively. Note that boundedness of $J_i$ does not lead to boundedness of $W_i$. Therefore, the observed $W_i$ may be unbounded due to unbounded noises such as Gaussian noise during transition.

The following lemma bounds the variance of the zeroth-order oracle (\ref{gi}).
	
\begin{lemma}\label{le variance}
	Under Assumption \ref{as J*}, for any $\theta^k\in\mathbb{R}^d$, $\mathbb{E}[\|g_i(\theta^k, u^k,\xi^k)\|^2]\leq \frac{(\hat{J}_i^{*2}+\hat{\sigma}_i^2)d_i}{\delta^2}$, $i\in\mathcal{V}$, here $\hat{\sigma}_i^2=\sum_{j\in\mathcal{I}_i^L}\sigma_j^2$.
\end{lemma}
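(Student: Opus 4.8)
The plan is to expand the squared norm of the oracle directly, using the explicit form $g_i(\theta^k,u^k,\xi^k)=\hat{W}_i(\theta^k+\delta u^k,\xi^k)u_i^k/\delta$ together with the decomposition $\hat{W}_i(\theta,\xi)=\hat{J}_i(\theta)+\hat{\xi}_i$ introduced in (\ref{J=EW}). Writing $\|g_i\|^2=\delta^{-2}|\hat{W}_i(\theta^k+\delta u^k,\xi^k)|^2\|u_i^k\|^2$ and taking expectation over both the exploration direction $u^k$ and the environment noise $\xi^k$, the whole problem reduces to bounding $\mathbb{E}[|\hat{W}_i|^2\|u_i^k\|^2]$ and then dividing by $\delta^2$.

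Next I would expand $|\hat{W}_i|^2=\hat{J}_i(\theta^k+\delta u^k)^2+2\hat{J}_i(\theta^k+\delta u^k)\hat{\xi}_i+\hat{\xi}_i^2$ and treat the three resulting terms separately, exploiting that the sampled perturbation $u^k$ is independent of the environment randomness $\xi^k$ (which is induced by $s_0^k$ and the transitions). For the first term, the pointwise bound $|\hat{J}_i(\theta)|\le\hat{J}_i^*$ from Assumption \ref{as J*} gives $\mathbb{E}[\hat{J}_i(\theta^k+\delta u^k)^2\|u_i^k\|^2]\le\hat{J}_i^{*2}\mathbb{E}[\|u_i^k\|^2]$. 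The cross term factorizes by independence into $2\mathbb{E}_{u}[\hat{J}_i(\theta^k+\delta u^k)\|u_i^k\|^2]\,\mathbb{E}_{\xi}[\hat{\xi}_i]$, which vanishes since $\mathbb{E}[\hat{\xi}_i]=\sum_{j\in\mathcal{I}_i^L}\mathbb{E}[\xi_j]=0$. For the third term, independence again decouples it into $\mathbb{E}_\xi[\hat{\xi}_i^2]\,\mathbb{E}_u[\|u_i^k\|^2]=\hat{\sigma}_i^2\mathbb{E}[\|u_i^k\|^2]$, using $\mathbb{E}[\hat{\xi}_i^2]=\hat{\sigma}_i^2$.

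Finally I would invoke the standard Gaussian second moment $\mathbb{E}[\|u_i^k\|^2]=d_i$, valid since $u_i^k\sim\mathcal{N}(0,I_{d_i})$, to collect the three bounds into $\mathbb{E}[|\hat{W}_i|^2\|u_i^k\|^2]\le(\hat{J}_i^{*2}+\hat{\sigma}_i^2)d_i$, and divide by $\delta^2$ to obtain the claim.

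The calculation is routine once the structure is laid out; the only real care is in justifying the independence between the exploration noise $u^k$ and the environment noise $\xi^k$, since this is precisely what makes the cross term vanish and lets the noise contribution collapse cleanly to $\hat{\sigma}_i^2 d_i$. A secondary point I would state explicitly is that $\mathbb{E}[\hat{\xi}_i^2]=\hat{\sigma}_i^2=\sum_{j\in\mathcal{I}_i^L}\sigma_j^2$ relies on the per-agent noises $\xi_j$ being (zero-mean and) uncorrelated; if they were merely correlated, the stated $\hat{\sigma}_i^2$ should instead be read as the second moment $\mathbb{E}[(\sum_{j\in\mathcal{I}_i^L}\xi_j)^2]$, and the rest of the argument would carry through unchanged.
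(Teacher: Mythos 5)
Your proof is correct and takes essentially the same route as the paper's: bound $\hat{J}_i^2$ by $\hat{J}_i^{*2}$, kill the cross term via $\mathbb{E}[\hat{\xi}_i]=0$, identify $\mathbb{E}[\hat{\xi}_i^2]$ with $\hat{\sigma}_i^2$, and use the Gaussian second moment $\mathbb{E}[\|u_i^k\|^2]=d_i$ (the paper just writes these steps as explicit Gaussian integrals normalized by $\kappa$ and cites Nesterov's Lemma~1 for the moment bound). Your closing caveat is well taken: the paper makes the same implicit assumption of uncorrelated per-agent noises when it replaces $\mathbb{E}[\hat{\xi}_i^{k2}]$ by $\hat{\sigma}_i^2=\sum_{j\in\mathcal{I}_i^L}\sigma_j^2$, so your reading of $\hat{\sigma}_i^2$ as the second moment of $\sum_{j\in\mathcal{I}_i^L}\xi_j$ in the correlated case is the honest general statement.
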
	
\begin{proof} According to (\ref{gi}), we have
	\begin{align}
			&\mathbb{E}[\|g_i(\theta^k, u^k,\xi^k)\|^2]=\frac{\mathbb{E}[\hat{W}_i^2(\theta^k+\delta u^k,\xi^k)\|u^k_i\|^2]}{\delta^2}\\
			&=\frac{1}{\delta^2\kappa}\int_{\mathbb{R}^d}\mathbb{E}\left[(\hat{J}_i(\theta^k+\delta u^k)+\hat{\xi}_i^k)^2\right]\|u^k_i\|^2e^{-\frac12\|u^k\|^2}du^k\\
			&\leq \frac{1}{\delta^2\kappa}\int_{\mathbb{R}^d} (\hat{J}_i^{*2}+\mathbb{E}[\hat{\xi}_i^{k2}]) \|u^k_i\|^2e^{-\frac12\|u^k\|^2}du^k\\
			&\leq \frac{\hat{J}_i^{*2}+\hat{\sigma}_i^2}{\delta^2\kappa}\int_{\mathbb{R}^{d_i}}\|u_i^k\|^2e^{-\frac12\|u_i^k\|^2}du_i^k\int_{\mathbb{R}^{d-d_i}}e^{-\frac12\|v\|^2}dv\\
			&\leq\frac{\hat{J}_i^{*2}+\hat{\sigma}_i^2}{\delta^2\kappa}d_i\int_{\mathbb{R}^{d_i}}e^{-\frac12\|u_i^k\|^2}du_i^k\int_{\mathbb{R}^{d-d_i}}e^{-\frac12\|v\|^2}dv\\
			&=\frac{(\hat{J}_i^{*2}+\hat{\sigma}_i^2)d_i}{\delta^2},
	\end{align}
where $\kappa$ is  in (\ref{kappa}), the first inequality used the independence between different $u_i$, the second inequality used $\int_{\mathbb{R}^{d_i}}\|u_i^k\|^2e^{-\frac12\|u_i^k\|^2}du_i^k\leq d_i\int_{\mathbb{R}^{d_i}}e^{-\frac12\|u_i^k\|^2}du_i^k$, which has been proved in \cite[Lemma 1]{nesterov2017random}.
\end{proof}	
	
Lemma \ref{le variance} shows that the variance of each local zeroth-order 	oracle is only associated with the bound of the local value-function $\hat{J}^*_i$, and the variances of observations noises for agents in $\mathcal{I}_i^L$. Note that if the policy evaluation is based on the global reward, the bound of $\mathbb{E}[\|g_i(\theta^k, u^k,\xi^k)\|^2]$ would be $\frac{(J^{*2}+\sum_{i=1}^N\sigma_i^2)d_i}{\delta^2}$. When the network is of a large scale, $J^*$ may be much larger than $\hat{J}_i^*$, which means that our algorithm has a significantly improved scalability to large-scale networks.
	
\begin{theorem}\label{th Lp}
	Under Assumptions \ref{as Lip}-\ref{as J*}, let $\delta=\frac{\epsilon}{L\sqrt{d}}$, $\eta=\frac{\epsilon^{1.5}}{d^{1.5}\sqrt{K}}$. Then 
	
	(i). $|J^\delta(\theta)-J(\theta)|\leq \epsilon$ for any $\theta\in\mathbb{R}^d$.
	
	(ii). By implementing Algorithm \ref{alg}, if $K\geq \frac{d^3B^2}{\epsilon^5}$, then
	\begin{equation}\label{rate}
	\frac{1}{K}\sum_{k=0}^{K-1}\mathbb{E}[\|\nabla_{\theta}J^\delta(\theta^k)\|^2]\leq\epsilon,
	\end{equation}
where $B=J^*-J^\delta(\theta^0)+L^4 (J_0^{2}+\sigma_0^2)/2$, $J_0=\max_{i\in\mathcal{V}}\hat{J}_i^*$, $\sigma_0=\max_{i\in\mathcal{V}}\hat{\sigma}_i$.
\end{theorem}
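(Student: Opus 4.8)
The plan is to treat (i) as a direct consequence of the Gaussian-smoothing construction and (ii) as a standard nonconvex stochastic gradient-ascent analysis applied to the smoothed surrogate $J^\delta$, which is differentiable even when $J$ is not.

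For (i), I would write $J^\delta(\theta)-J(\theta)=\mathbb{E}[J(\theta+\delta u)-J(\theta)]$ with $u\sim\mathcal{N}(0,I_d)$, move the absolute value inside the expectation, and apply the $L$-Lipschitz property of $J$ (established right after Assumption \ref{as Lip}) to get $|J^\delta(\theta)-J(\theta)|\leq L\delta\,\mathbb{E}[\|u\|]$. Jensen's inequality then gives $\mathbb{E}[\|u\|]\leq\sqrt{\mathbb{E}[\|u\|^2]}=\sqrt{d}$, so the bound is $L\delta\sqrt{d}$; substituting $\delta=\epsilon/(L\sqrt{d})$ yields exactly $\epsilon$.

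For (ii), the essential preliminary fact is that $J^\delta$ has a Lipschitz-continuous gradient. I would invoke the smoothing results of \cite{nesterov2017random}: since $J$ is $L$-Lipschitz, $J^\delta$ is continuously differentiable with $\tilde{L}$-Lipschitz gradient, where $\tilde{L}=L\sqrt{d}/\delta$. The descent inequality applied to the maximization update $\theta^{k+1}=\theta^k+\eta g^k$, with $g^k=((g_1^k)^\top,\dots,(g_N^k)^\top)^\top$ the stacked one-point oracle from (\ref{gi}), gives $J^\delta(\theta^{k+1})\geq J^\delta(\theta^k)+\eta\langle\nabla J^\delta(\theta^k),g^k\rangle-\tfrac{\tilde{L}\eta^2}{2}\|g^k\|^2$. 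Taking conditional expectation and using that stacking the unbiasedness lemma over $i$ gives $\mathbb{E}[g^k\mid\theta^k]=\nabla J^\delta(\theta^k)$ collapses the inner-product term to $\|\nabla J^\delta(\theta^k)\|^2$. For the second-moment term I would aggregate Lemma \ref{le variance} across agents: since $\|g^k\|^2=\sum_i\|g_i^k\|^2$, combining $\mathbb{E}[\|g_i^k\|^2]\leq(\hat{J}_i^{*2}+\hat{\sigma}_i^2)d_i/\delta^2$ with $J_0=\max_i\hat{J}_i^*$, $\sigma_0=\max_i\hat{\sigma}_i$ and $\sum_i d_i=d$ yields $\mathbb{E}[\|g^k\|^2]\leq(J_0^2+\sigma_0^2)d/\delta^2$.

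Summing the rearranged inequality over $k=0,\dots,K-1$ telescopes the $J^\delta$ differences, and bounding the terminal value by $\mathbb{E}[J^\delta(\theta^K)]\leq J^*$ (which follows from Assumption \ref{as J*} since $|J^\delta|\leq|J|\leq J^*$) produces $\frac{1}{K}\sum_k\mathbb{E}[\|\nabla J^\delta(\theta^k)\|^2]\leq\frac{J^*-J^\delta(\theta^0)}{\eta K}+\frac{\tilde{L}\eta}{2}\cdot\frac{(J_0^2+\sigma_0^2)d}{\delta^2}$. The remainder is bookkeeping: substituting $\delta=\epsilon/(L\sqrt{d})$, $\tilde{L}=L\sqrt{d}/\delta$, and $\eta=\epsilon^{1.5}/(d^{1.5}\sqrt{K})$ collapses both terms into the common factor $d^{1.5}/(\epsilon^{1.5}\sqrt{K})$ times $J^*-J^\delta(\theta^0)+L^4(J_0^2+\sigma_0^2)/2=B$, and forcing $d^{1.5}B/(\epsilon^{1.5}\sqrt{K})\leq\epsilon$ is exactly $K\geq d^3B^2/\epsilon^5$. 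I expect the main obstacle to be pinning down the correct smoothness constant $\tilde{L}=L\sqrt{d}/\delta$ from the Lipschitz (rather than $C^{1,1}$) assumption on $J$, since both the constant $B$ and the iteration count hinge on this particular scaling in $d$ and $\delta$; once that is in hand, the descent-lemma and telescoping template together with the aggregated variance bound give the claim after careful substitution.
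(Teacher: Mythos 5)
Your proposal is correct and follows essentially the same route as the paper: part (i) is the Gaussian-smoothing bound $|J^\delta(\theta)-J(\theta)|\leq L\delta\sqrt{d}$ (which the paper obtains by citing \cite{nesterov2017random}), and part (ii) uses the same $\sqrt{d}L/\delta$-smoothness of $J^\delta$, the same descent inequality with the unbiasedness lemma and the agent-wise aggregation of Lemma \ref{le variance} into $(J_0^2+\sigma_0^2)d/\delta^2$, followed by the same telescoping and parameter substitution yielding $d^{1.5}B/(\epsilon^{1.5}\sqrt{K})$. The only cosmetic slip is the chain ``$|J^\delta|\leq|J|\leq J^*$'': the correct justification is $|J^\delta(\theta)|\leq\mathbb{E}[|J(\theta+\delta u)|]\leq J^*$ since $J^*$ is a uniform bound, not a pointwise comparison of $|J^\delta|$ with $|J|$.
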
	
\begin{proof}
 Statement (i) is obtained by using the Lipschitz continuity of $J(\theta)$, which has been proved in \cite{nesterov2017random}. 
 
 Now we prove statement (ii). According to Assumption \ref{as Lip} and \cite[Lemma 2]{nesterov2017random}, the gradient of $J^\delta(\theta)$ is $\sqrt{d}L/\delta$-Lipschitz continuous. Let $g^k=(g_1^\top(\theta^k,u^k,\xi^k),...,g_N^\top(\theta^k,u^k,\xi^k))^\top\in\mathbb{R}^d$, the following holds:
	\begin{multline}
		|J^\delta(\theta^{k+1})- J^\delta(\theta^k)-\langle\nabla_{\theta}J^\delta(\theta^k),\eta g(\theta^k)\rangle|
		\leq\frac{\sqrt{d}L}{2\delta}\eta^2\|g^k\|^2,
	\end{multline}
	which implies that
	\begin{equation}\label{nablaJetag}
	\langle\nabla_{\theta}J^\delta(\theta^k),\eta g(\theta^k)\rangle\leq J^\delta(\theta^{k+1})- J^\delta(\theta^k)+\frac{\sqrt{d}L}{2\delta}\eta^2\|g^k\|^2.
	\end{equation}

Lemma \ref{le variance} implies that 
\begin{equation}
	\begin{split}
		&\mathbb{E}[\|g(\theta^k, u^k,\xi^k)\|^2]=\sum_{i=1}^N \mathbb{E}[\|g_i(\theta^k, u^k,\xi^k)\|^2]\\ &\leq \sum_{i=1}^N(2J_0^{2}+2\sigma_0^2)d_i/\delta^2=(2J_0^{2}+2\sigma_0^2)d/\delta^2.
	\end{split}
\end{equation}
Taking expectation on both sides of (\ref{nablaJetag}), we obtain 
	\begin{align*}
			&\eta\mathbb{E}[\|\nabla_{\theta}J^\delta(\theta^k)\|^2]\\
			&\leq \mathbb{E}[J^\delta(\theta^{k+1})- J^\delta(\theta^k)]+\frac{\sqrt{d}L}{2\delta}\eta^2\mathbb{E}[\|g(\theta^k,u^k,\xi^k)\|^2]\\
			&\leq \mathbb{E}[J^\delta(\theta^{k+1})- J^\delta(\theta^k)] +\frac{L\eta^2(J_0^{2}+\sigma_0^2)d^{1.5}}{2\delta^3}.
	\end{align*}
It follows that
\begin{align}
&\frac{1}{K}\sum_{k=0}^{K-1}\mathbb{E}[\|\nabla_{\theta}J^\delta(\theta^k)\|^2]\\ &\leq\frac{1}{\eta}\left[\frac1K\left(\mathbb{E}[J^\delta(\theta^K)]-J^\delta(\theta^0)\right)+\frac{L\eta^2 (J_0^{2}+\sigma_0^2)d^{1.5}}{2\delta^3}\right]\\
&\leq\frac{d^{1.5}}{\epsilon^{1.5} \sqrt{K}}\left[ (J^*-J^\delta(\theta^0))+L^4 (J_0^{2}+\sigma_0^2)/2\right].\label{averagenabla}
\end{align}
The proof is completed.	
\end{proof}	

\textbf{Advantage Analysis.} If the policy evaluation is based on global value-function $J(\theta)$, then $J_0^2$ and $\sigma_0^2$ in (\ref{averagenabla}) will be replaced by $J^{*2}$ and $\sum_{i=1}^N\sigma_i^2$, which may dominate the complexity and be significantly larger than $J_0^2$ and $\sigma_0^2$ when there are a large number of agents. This implies that our algorithm has a higher scalability to large-scale network problems than the algorithms with policy evaluation based on the global value-function.

According to (\ref{averagenabla}), the convergence result can also be written as $\frac{1}{K}\sum_{k=0}^{K-1}\mathbb{E}[\|\nabla_{\theta}J^\delta(\theta^k)\|^2]\leq \mathcal{O}(Bd^{1.5}\epsilon^{-1.5}K^{-0.5})$, which is similar to the result in \cite[Theorem 3.2]{zhang2020cooperative}. However, they require a consensus algorithm to be run for a sufficiently long time during each learning episode, which is unnecessary in our algorithm. Therefore, our algorithm has a lower sample complexity. However, the residual feedback used in \cite{zhang2020cooperative} may exhibit a lower gradient variance than the one-point feedback (\ref{gi}) when $\hat{J}^*_i$ is very large for each agent $i$. In the next subsection, we show that if the two-point feedback or the residual feedback is utilized in our work, our local value-function design can further reduce the gradient estimation variance for large-scale networks.

\subsection{Discussions on Distributed RL with Two-Point Feedback} \label{subsec two-point}

We have proposed a distributed RL algorithm based on the one-point zeroth-order oracle (\ref{gi}). We observe that Algorithm \ref{alg} is always efficient as long as $g(\theta^k,u^k,\xi^k)$ is an unbiased estimation of $\nabla_\theta J^\delta(\theta^k)$ and $\mathbb{E}[\|g(\theta^k,u^k,\xi^k)\|^2]$ is bounded. Therefore, the two-point feedback oracles proposed in \cite{nesterov2017random} and the residual feedback oracle in \cite{zhang2020improving} can also be employed in Algorithm \ref{alg}. Based on the local value-function design in our work, the two-point feedback oracle and the residual feedback oracle can be obtained by
\begin{equation}\label{gbar}
	\bar{g}_i(\theta^k, u^k,\xi^k)=\frac{\hat{W}_i(\theta^k+\delta u^k,\xi^k)-\hat{W}_i(\theta^k,\xi^k)}{\delta}u^k_i,
\end{equation}
and
\begin{equation}
	\small
	\tilde{g}_i(\theta^k, u^k,\xi^k)=\frac{\hat{W}_i(\theta^k+\delta u^k,\xi^k)-\hat{W}_i(\theta^{k-1}+\delta u^{k-1},\xi^{k-1})}{\delta}u^k_i,
\end{equation}
respectively, where only local value-function is needed, in contrast to \cite{nesterov2017random,zhang2020improving}. 

Define $\hat{L}_i=\sum_{j\in\mathcal{I}_i^L}L_i$, which is actually a Lipschitz constant of $\hat{W}_i$. Then we are able to show
\begin{align}
		&\mathbb{E}[\|\bar{g}_i(\theta^k,u^k,\xi^k)\|^2]\leq \mathbb{E}[(\hat{L}_i^2+\hat{\sigma}_i^2) \|u^k\|^2 \|u_i^k\|^2]\\
		&\leq (\hat{L}_i^2+\hat{\sigma}_i^2)\left(\mathbb{E}[\|u_i\|^4]+\mathbb{E}[\sum_{j\in\mathcal{V}\setminus\{i\}}\|u_j\|^2]\mathbb{E}[\|u_i\|^2]\right)\\
		&\leq (\hat{L}_i^2+\hat{\sigma}_i^2)\left((d_i+4)^2+(d-d_i)d_i\right)\\
		&=(\hat{L}_i^2+\hat{\sigma}_i^2)(d_id+8d_i+16).
\end{align}
As a result,
\begin{equation}\label{boundgbar}
	\begin{split}
		\mathbb{E}[\|\bar{g}(\theta^k,u^k,\xi^k)\|^2]
		&\leq (L_0^2+\sigma_0^2)(d+4)^2,
	\end{split}
\end{equation}
where $L_0=\max_{i\in\mathcal{V}}\hat{L}_i$.

\textbf{Advantage Analysis.} If the two-point feedback oracle (\ref{gbar}) is based on  the global observation $W(\theta,\xi)=J(\theta)+\sum_{i=1}^N\xi_i$, which is the case in \cite{nesterov2017random}, the bound in (\ref{boundgbar}) will be increased by replacing $L_0^2$ and $\sigma_0^2$ with $L^2$ and $\sum_{i=1}^N\sigma^2$. Similarly, for the residual feedback, we can show that the bound of $\mathbb{E}[\|\tilde{g}_i(\theta^k,u^k,\xi^k)\|^2]$ can be reduced by using local value-functions due to the reduction of the Lipschitz constant from $L$ to $L_0$. In conclusion, a local value-function always exhibits a lower variance of the zeroth-order oracle than the global value-function, thereby inducing a lower sample complexity.

	
\section{A Distributed Resource Allocation Example}	\label{sec sim}
	
We adopt a generalized version of the example in \cite{zhang2020cooperative}, where the coordination graph is directed and the partial observation of each agent includes states of its collaborative neighbors and itself instead of only itself. Nonetheless, our algorithm is applicable to the case when the partial observation of each agent is itself only.
	
Consider a distributed resource allocation problem with a coordination graph $\mathcal{G}_C=(\mathcal{V},\mathcal{E}_C)$. There are $N$ agents representing $N$ warehouses. At time step $t$, each agent $i$ stores resources with the amount $m_i(t)\in\mathbb{R}$,  receives a local demand $d_i(t)\in\mathbb{R}$, sends partial of its resources to and receives resources from its neighbors $j\in\mathcal{N}_i$. Then agent $i$ has the following dynamics
	\begin{equation}
		\begin{split}
			m_i(t+1)&=m_i(t)-\sum_{j\in\mathcal{N}_i^{out}}a_{ij}(t)m_i(t)\\
			&~~~~+\sum_{j\in\mathcal{N}_i}a_{ji}(t)m_j(t)-d_i(t),\\
			d_i(t)&=A_i(1-\sin(w_{i,t}t))+w_{i,t},
		\end{split}
	\end{equation}
	where $a_{ij}(t)\in[0,1]$ denotes the fraction of resources agent $i$ sends to its neighbor $j$ at time $t$, $0<A_i<m_i(0)$ is a constant, $w_{i,t}$ is a bounded noise with zero mean, $\mathcal{N}_i=\{j\in\mathcal{V}: (j,i)\in\mathcal{E}_C\}$ and  $\mathcal{N}_i^{out}=\{j\in\mathcal{V}:(i,j)\in\mathcal{E}_C\}$ are the in-neighbor set and the out-neighbor set of agent $i$, respectively.
	
	Under Assumption \ref{as GL} and a learning graph $\mathcal{G}_L=(\mathcal{V},\mathcal{E}_L)$ with $\mathcal{E}_L$ defined in (\ref{EL}),  agent $i$ receives the following information at time step $t$ during the learning process:
	
	\textbf{Partial observation:} $o_{i,t}=[m_{\mathcal{I}_i}^\top(t),d_i^\top(t)]^\top\in\mathbb{R}^{|\mathcal{I}_i|+1}$.
	
	\textbf{Individual reward:} $r_{i,t}=0$ if $m_i(t)\geq0$, and $r_{i,t}=-m_i^2(t)$ otherwise.
	
	\textbf{Individual rewards of neighbors in $\mathcal{G}_L$:} $r_{j,t}$, $j\in\mathcal{N}_i^L$.
	
	Define $a_{i,t}=(...,a_{ij}(t),...)^\top_{j\in\mathcal{N}_i^{out}}\in\mathbb{R}^{|\mathcal{N}_i^{out}|}$ as the action agent $i$ takes. The local policy $\pi_i:\mathbb{R}^{|\mathcal{I}_i|+1}\rightarrow\mathbb{R}^{|\mathcal{N}_i|}$ maps the observation $o_{i,t}$ of agent $i$ to its action $a_{i,t}$. The global policy $\pi:\mathbb{R}^{18}\rightarrow\mathbb{R}^{\sum_{i\in\mathcal{V}}|\mathcal{N}_i|}$ maps all the agents' states to all the local policies.
	
	The cooperative control objective is to find an optimal policy $\pi^*$ such that the following global reward is maximized:
	\begin{equation}
		J=\sum_{i=1}^N\sum_{t=0}^{T-1}\gamma^t r_{i,t}.
	\end{equation}
	
	To seek the optimal policy $\pi_i(o_i)$ for agent $i$ to determine its action $\{a_{ij}\}_{j\in\mathcal{N}_i}$, we parameterize the policy function by defining
	\begin{equation}
		a_{ij}=\frac{\exp(-z_{ij})}{\sum_{j\in\mathcal{I}_i}\exp(-z_{ij})},
	\end{equation}
	where $z_{ij}$ is approximated by radial basis functions:
	\begin{equation}
		z_{ij}=\sum_{l=1}^{n_c}\|o_i-c_{il}\|^2\theta_{ij}(l),
	\end{equation}
	$c_{il}=(\hat{c}_{il}^\top,\bar{c}_{il})\in\mathbb{R}^{|\mathcal{I}_i|+1}$ is the center of the $l$-th feature for agent $i$, here $\hat{c}_{il}\in\mathbb{R}^{|\mathcal{I}_i|}$ and $\bar{c}_{il}$ are set according to the ranges of $m_{\mathcal{I}_i}$ and $d_i$, respectively, such that $c_{il}$, $l=1,...,n_c$ are approximately evenly distributed in the range of $o_i$.

\begin{example}\label{ex 9warehouses}
Consider 9 warehouses with the coordination graph shown in Fig. \ref{fig coordination graph}. Fig. \ref{fig learning graph} shows the corresponding learning graph.
Set $m_i(0)=1+w_i$ for all $i=1,...,9$, $w_i$ is a random variable with zero mean and is bounded by $0.01$, iteration number $T=8$, number of learning epochs $K=600$, $d_i=0.2(1-\sin (wt))+wt$, $w\sim\mathcal{N}(0,0.01)$. Fig. \ref{Fig.4} (left) depicts the evolution of the observed values of the global value-function by implementing 4 different algorithms repeatedly for 10 times. Here $\xi^*$ corresponds to a fixed initial state $m_i(0)=1$, $i=1,...,9$. In each time of implementation, one perturbation vector $u^k$ is sampled and used for all the 4 algorithms during each learning episode $k$. The centralized algorithm is the zeroth-order optimization algorithm based on global value evaluation, while the distributed algorithm is based on local value evaluation (Algorithm \ref{alg}). The distributed two-point feedback algorithm is Algorithm \ref{alg} with $g_i(\theta^k,u^k,\xi^k)$ replaced by $\bar{g}_i(\theta^k,u^k,\xi^k)$ in (\ref{gbar}). We observe that the distributed algorithms are always faster than the centralized algorithms. Fig. \ref{Fig.4} (middle) and Fig. \ref{Fig.4} (right) show the comparison of centralized and distributed one-point feedback algorithms, and the comparison of centralized and distributed two-point feedback algorithms, respectively. From these two figures, it is clear that the distributed algorithms always exhibit lower variances in contrast to the centralized algorithms. This implies that the policy evaluation based on local value-functions is more robust than policy evaluation based on the global value-function.

\begin{figure*}[htbp]
\centering
\subfloat{\includegraphics[width=0.33\textwidth]{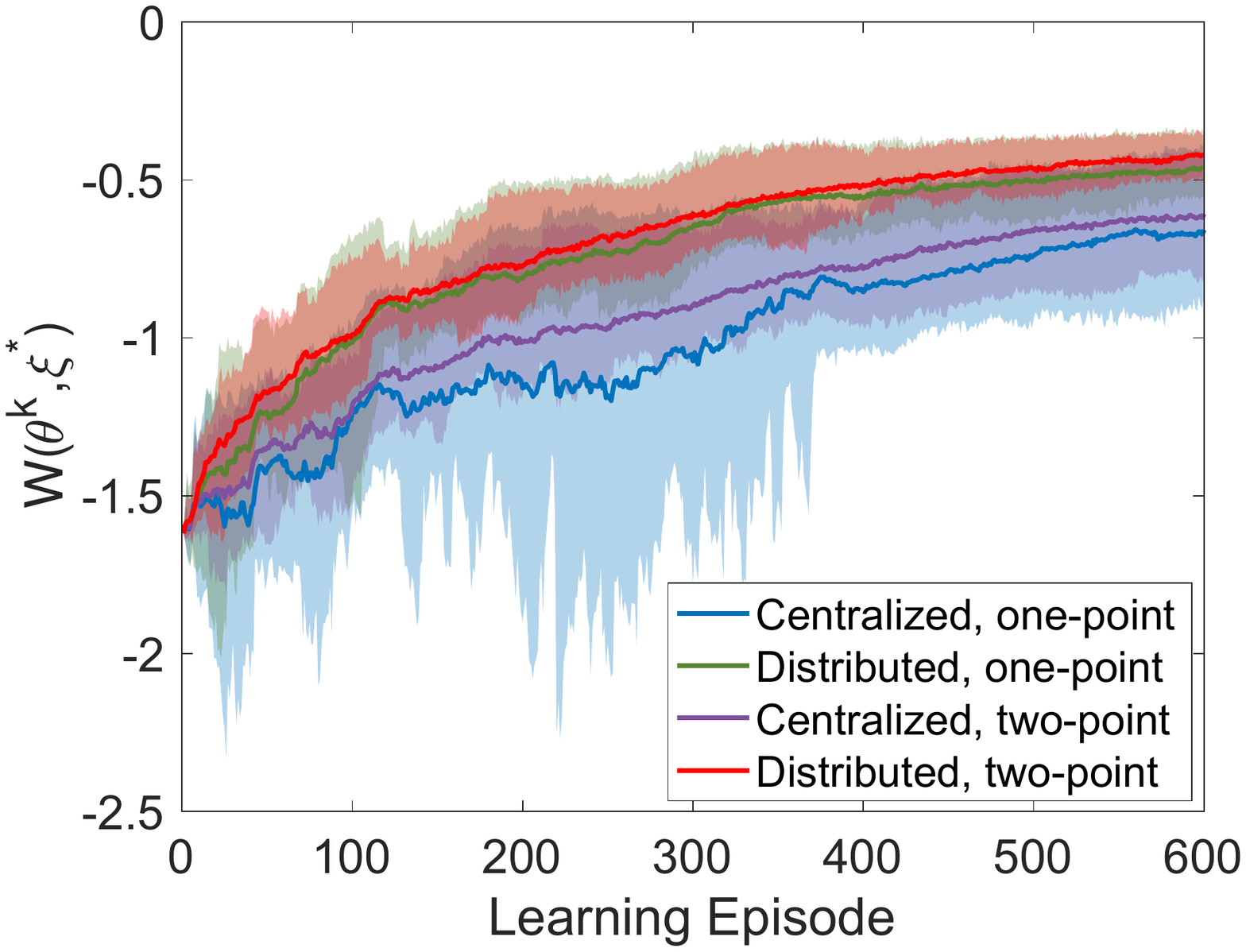}%
\label{}} \hfill
\subfloat{\includegraphics[width=0.33\textwidth]{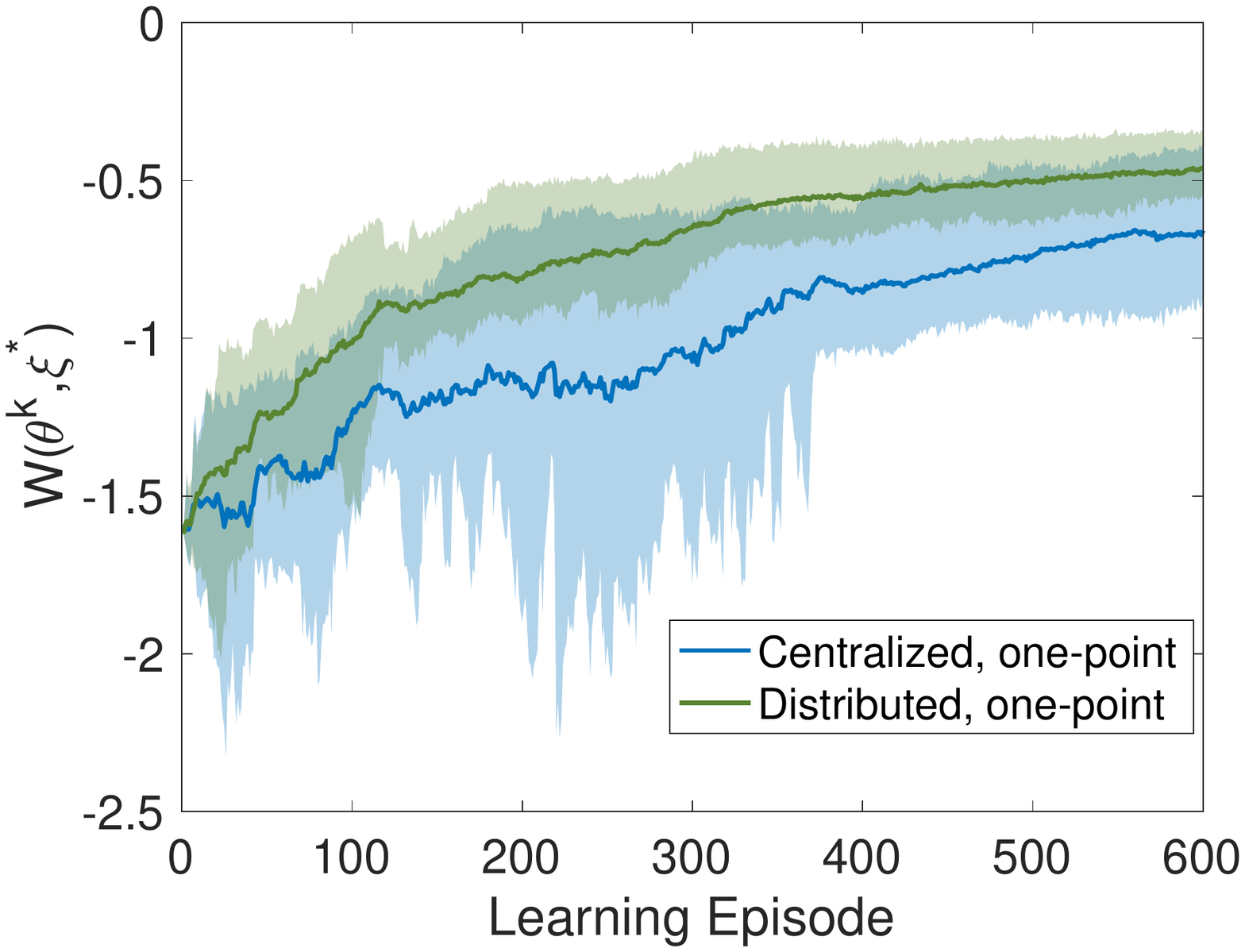}%
\label{}} \hfill
\subfloat{\includegraphics[width=0.33\textwidth]{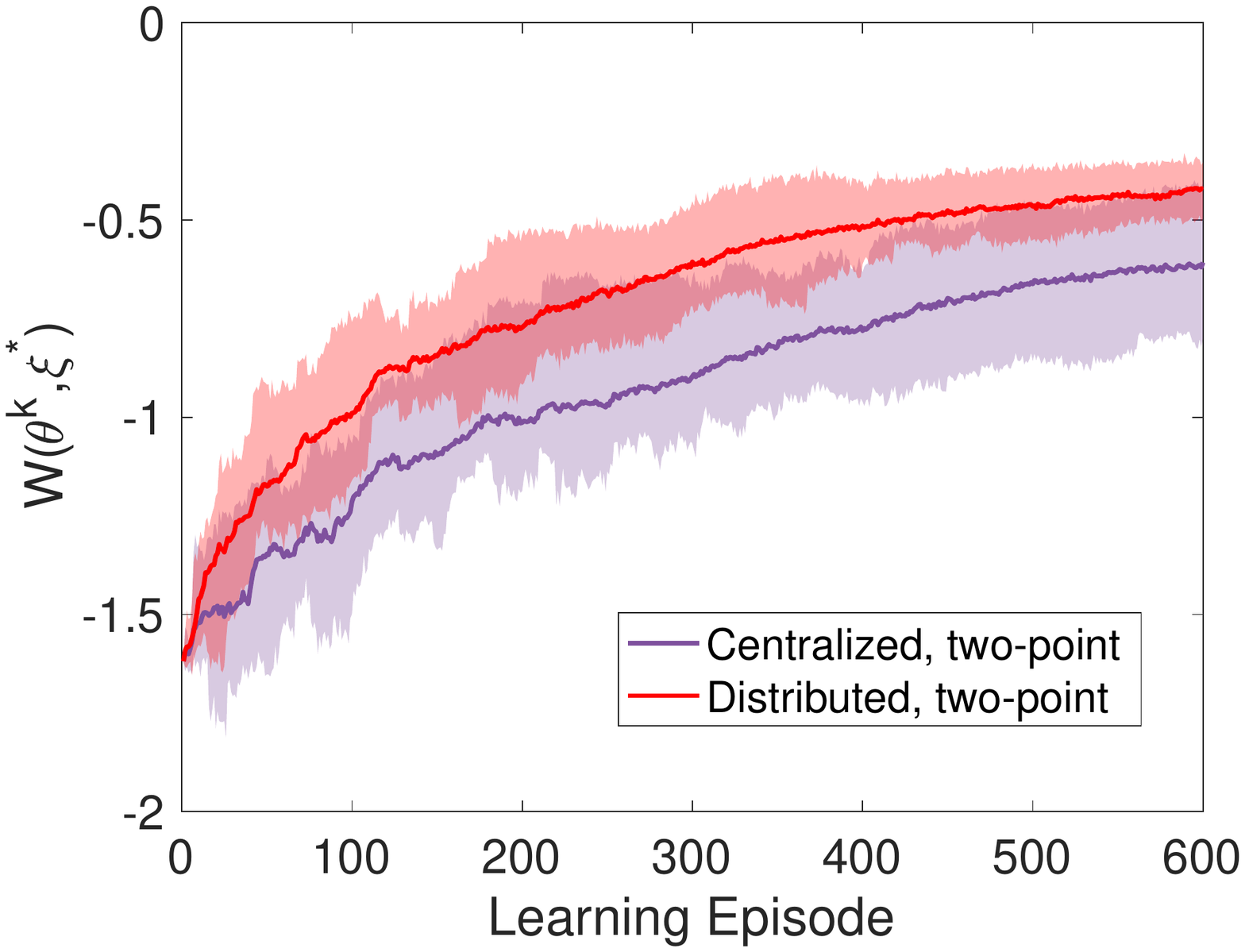}%
\label{}}	
\hspace*{0in}
\caption{Comparison of \textbf{(Left)} different algorithms for 9 warehouses;  \textbf{(Middle)} centralized and distributed algorithms under zeroth-order oracles with one-point feedback; \textbf{(Right)} centralized and distributed algorithms under zeroth-order oracles with two-point feedback.} 
\label{Fig.4}
\end{figure*} 




\end{example}

\begin{example}
We further consider a larger-size case where there are $N=100$ warehouses. The coordination graph $\mathcal{G}_C=(\mathcal{V},\mathcal{E}_C)$ has the following edge set: 
\begin{equation*}
    \mathcal{E}_C=\{(i,j)\in\mathcal{V}^2: |i-j|=1, i=2k-1, k\in\mathbb{N}\}\cup\{(1,N)\}.
\end{equation*}
Accordingly, graph $\mathcal{G}_L=(\mathcal{V},\mathcal{E}_L)$ has the edge set 
\begin{equation*}
    \mathcal{E}_L=\{(i,j)\in\mathcal{V}^2: |i-j|=1, i=2k, k\in\mathbb{N}\}\cup\{(N,1)\}.
\end{equation*}
The parameter settings are the same as those in Example \ref{ex 9warehouses} Fig. \ref{fig 100agents} shows the evolutions of the observed values of the global value-function by implementing 4 different algorithms repeatedly for 10 times. Observe that when the network is of a large scale, policy evaluation based on the global value-function becomes very unstable due to the high variance of the gradient estimates. On the contrary, our distributed algorithms based on observing local value-functions are more robust and exhibit significantly lower variances. This is consistent with our analysis in Subsections \ref{subsec convergence} and \ref{subsec two-point}.

\begin{figure}
	\centering
	\includegraphics[width=6cm]{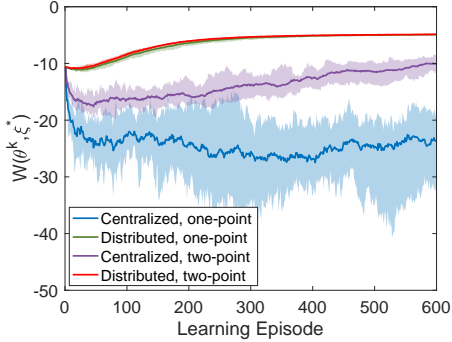}
	\caption{Comparison of different algorithms for 100 warehouses.} \label{fig 100agents}
\end{figure}

\end{example}

	\section{Conclusions}\label{sec: conclusion}
	
We studied the MARL problem with a directed coordination graph via a MDP formulation. Based on the weakly connected coordination graph, we showed that a local value-function can be designed for each agent, which is sufficient for estimating the local gradient of the global value-function. To ensure that each agent has access to the value of the local value-function, we further specified how the learning graph can be designed. By employing zeroth-order oracles, we proposed a distributed RL algorithm that has guaranteed convergence, and exhibits reduced variance of the gradient. The algorithm was applied to a distributed resource allocation problem. For our future work we would like to investigate how to reduce the number of required communication links using graph-theoretic reduction techniques.

\bibliography{Reference.bib}
\bibliographystyle{IEEEtran}

\end{document}